\documentclass[10pt, conference]{IEEEtran}
\usepackage{amsthm}
\usepackage{amssymb}
\usepackage{graphicx}
\usepackage{amsmath}
\usepackage{newtxmath}
\usepackage[cal=cm]{mathalpha} 
\usepackage{tablefootnote} 
\usepackage{threeparttable} 
\usepackage{bm}
\newtheorem{theorem}{Theorem}{}
{}
{}
{}
{}
{}

\usepackage{algorithm}
\usepackage{algpseudocode}
\usepackage{graphics}
\usepackage{epsfig}
\usepackage{multirow}
\usepackage{caption}
\usepackage{array}
\usepackage{cite}
\usepackage{stfloats}
\usepackage{subfig} 
\usepackage{booktabs}
\usepackage{comment}
\usepackage{color}
\usepackage{tabu} 
\usepackage{stfloats}
\usepackage{mathtools}
\usepackage{stmaryrd} 
\makeatletter
\let\sum\relax
\let\prod\relax
\DeclareSymbolFont{largesymbols}{OMX}{cmex}{m}{n}
\DeclareMathSymbol{\sum}{\mathop}{largesymbols}{"50}
\DeclareMathSymbol{\prod}{\mathop}{largesymbols}{"51}
\usepackage[colorlinks,linkcolor=blue,anchorcolor=blue,citecolor=blue,bookmarks=true]{hyperref}
\usepackage[capitalize]{cleveref} 
\crefname{figure}{Fig.}{Figs.}   
\Crefname{figure}{Fig.}{Figs.}   

\DeclareMathOperator*{\diag}{\mathrm{diag}}

\DeclareMathOperator*{\tr}{\mathrm{tr}}

\DeclareMathOperator*{\var}{\mathrm{var}}

\begin{document}
	\title{{\Huge Movable Antenna Enhanced Wireless Sensing via Steering Vector Correlation and CRB Optimization}}
				\IEEEoverridecommandlockouts
	\vspace{-3em}
	\author{\IEEEauthorblockN{
			Chengzhi Ye\IEEEauthorrefmark{1}\IEEEauthorrefmark{2}, Ruoyu Zhang\IEEEauthorrefmark{2}, Shichao Wei\IEEEauthorrefmark{1}\IEEEauthorrefmark{2}, Wen Wu\IEEEauthorrefmark{2}, Byonghyo Shim\IEEEauthorrefmark{3},}
		\IEEEauthorblockA{\IEEEauthorrefmark{1}Qian Xuesen College, Nanjing University of Science and Technology, Nanjing 210094, China.
		}
		\IEEEauthorblockA{\IEEEauthorrefmark{2}Key Laboratory of Near-Range RF Sensing ICs and Microsystems (NJUST), Ministry of Education,\\ School of Electronic and Optical Engineering, Nanjing University of Science and Technology, Nanjing 210094, China.
		}
		\IEEEauthorblockA{\IEEEauthorrefmark{3}Department of Electrical and Computer Engineering, Seoul National University, Seoul 08826, South Korea.
		}
		Email: qxsycz8166@njust.edu.cn, ryzhang19@njust.edu.cn, weishichao@njust.edu.cn, wuwen@njust.edu.cn, bshim@snu.ac.kr. 
		\thanks{
			This work was supported in part by the National Natural Science Foundation of China under Grant 62571248, in part by the National Research Foundation of Korea (NRF) grant funded by the Korea government(MSIT) (2022M3C1A3099336), in part by the Undergraduate Research
			Training Program of Nanjing University of Science and Technology under
			Grant 202510288068 and Grant S125. (Corresponding author: Ruoyu Zhang).
		}
	}
	\vspace{-2em}
	
	
	\maketitle
	\vspace{-2em}


	%
	\vspace{-2em}
	\IEEEpeerreviewmaketitle
	\vspace{-2em}
	
	\begin{abstract}
		
		In this paper, we investigate the angle-of-arrival (AoA) estimation problem for wireless sensing systems equipped with movable antennas (MA). To achieve high estimation performance and accuracy, we formulate a joint optimization problem integrating the sidelobes of steering vector correlation (SVC) and the Cramér-Rao bound (CRB). We first mathematically transform the SVC and the CRB into tractable objective functions. Specifically, we introduce a proxy variable and apply a discrete grid search strategy to overcome the intractability of optimizing the SVC with unknown target angles. Concurrently, we derive a generalized lower bound for the CRB, which yields a scalar function of the MA positions. Guided by the transformed objective, we propose a successive convex approximation-based position optimization algorithm. The proposed algorithm handles the non-convex terms by employing first order Taylor expansions within a defined trust region, which allows the MA positions to be updated incrementally in each iteration. Simulation results demonstrate that the proposed algorithm achieves superior AoA estimation performance.
		
	\end{abstract}
	
	\begin{IEEEkeywords}
		Wireless sensing, movable antenna (MA), angle-of-arrival (AoA) estimation, steering vector correlation (SVC), Cramér-Rao bound (CRB).
	\end{IEEEkeywords}

\vspace{-0.5em}
	\section{Introduction}
	The forthcoming sixth-generation (6G) wireless networks are anticipated to support a massive proliferation of location-aware applications, including autonomous driving, robotic navigation, and immersive extended reality \cite{Saad6G}. To enable these advanced use cases, future networks must simultaneously satisfy stringent quality of service requirements for reliable data transmission and provide unprecedented high-precision environmental perception \cite{Jiang6g}. Driven by these dual demands, integrated sensing and communication (ISAC) has emerged as a transformative paradigm to jointly provide both services by sharing hardware and spectral resources \cite{ISACXUewen}. Consequently, wireless sensing is envisioned to become a primary service in 6G, encompassing the detection, localization, and extraction of physical information from surrounding targets \cite{DQLISAC2026}. 
	
	To achieve superior sensing performance with high angular resolution, we need to deploy large-scale antenna arrays, such as massive multiple-input multiple-output (MIMO) antenna array \cite{RuoyuUnified2024,ruoyu2025Large,ChengzhiTWCMIMOOFDM, ye2026RAsensing}. We note that the performance gains of such massive arrays are inevitably accompanied by prohibitive hardware costs, escalated energy consumption, and immense computational burdens \cite{UltramassiveMIMO}. As a cost-effective alternative, sparse antenna arrays have been extensively investigated. The essence of this approach is to attain comparable spatial resolution by enlarging the inter-antenna spacing with significantly fewer elements \cite{SparseMIMOsensing}. Fundamentally, both conventional massive arrays and sparse arrays rely on fixed-position antennas (FPAs) \cite{CanSparse}. So the FPA-based architectures inherently lack the flexibility to dynamically switch between optimal array geometries. 
	
	To tackle the challenges, movable antenna (MA) systems have garnered considerable interest in both communication and sensing domains, primarily owing to their capability to dynamically optimize channel conditions via intelligent antenna placement \cite{11218873,ye2026selfcalibrationdoaestimationmovable,maCompressedSensingBased2023,BNIngMovableAntennaEnhanced2025,zhangChannelEstimationMovableantenna2024,Chengzhi2025,ChengzhiIOTJ20262D}. In contrast to conventional FPAs, MA arrays facilitate flexible antenna movement, thereby enabling the comprehensive exploitation of spatial degrees of freedom (DoFs) within a confined deployment region. 
	Recent literature has validated the superior wireless sensing capabilities of MA architectures, revealing that strategic antenna positioning substantially boosts overall sensing accuracy \cite{maMovableAntennaEnhanced2024,Shaoxiaodan6dsensing, 11456856}. 
	For instance, \cite{maMovableAntennaEnhanced2024} primarily focus on optimizing MA positions by minimizing the Cramér-Rao bound (CRB) for single-target scenarios. However, the resulting array configurations cannot be directly extrapolated to multi-target environments. Such single-target optimization inadvertently aggravates the steering vector correlation (SVC) associated with different spatial directions, causing severe sidelobes with pronounced peak levels in the spatial spectrum, which act as strong interference and significantly degrade the estimation accuracy of other concurrent target sources. 
	
	Motivated by these critical limitations, this paper investigates a  MA positions optimization framework for wireless sensing. Specifically, we formulate a joint optimization problem aiming to simultaneously minimize the sidelobes of the SVC and the CRB. To enhance mathematical tractability, on one hand, we transform the SVC into a tractable objective function by introducing a proxy variable and employing a discrete grid search strategy across the spatial domain. On the other, we derive a generalized lower bound for the CRB, which yields a scalar function expression of the MA positions.
	Building upon this analytical insight, we propose a successive convex approximation (SCA)-based iterative algorithm to solve the joint optimization problem iteratively. The proposed algorithm handles the non-convex terms by employing first order Taylor expansions, which allows the MA positions to be updated incrementally in each iteration. This iterative framework ensures that the optimized array configuration achieves a converged balance between the minimization of the sidelobes of the SVC and the lower bound of the CRB for the MA positions. Simulation results demonstrate that the proposed MA positions effectively suppress sidelobes and achieves superior estimation performance compared to conventional benchmark schemes.

		Notations: Symbols $a$, $\bm{a}$, $\bm{A}$, $\mathcal{A}$ denote a scalar, a vector, a matrix, and a set, respectively. The trace, inverse and the real part of a matrix are represented by $\tr\{\cdot\}$, $(\cdot)^{-1}$, and $\mathfrak{R}\{\cdot \}$.
		The linear space spanned by matrix $\bm{A}$ is denoted by $\mathrm{span}{ (\bm{A}) }$.
		The $K\times K$ dimensional identity matrix is expressed by $\bm{I}_{K}$. The $N \times 1$ dimensional vector with all the elements equal to 1 is given by $\bm{1}_{N}$.

	\section{System Model and Problem Formulation}
	\begin{figure}[t]
		\vspace{-1em}
		\centering
		\includegraphics[width=0.75\linewidth]{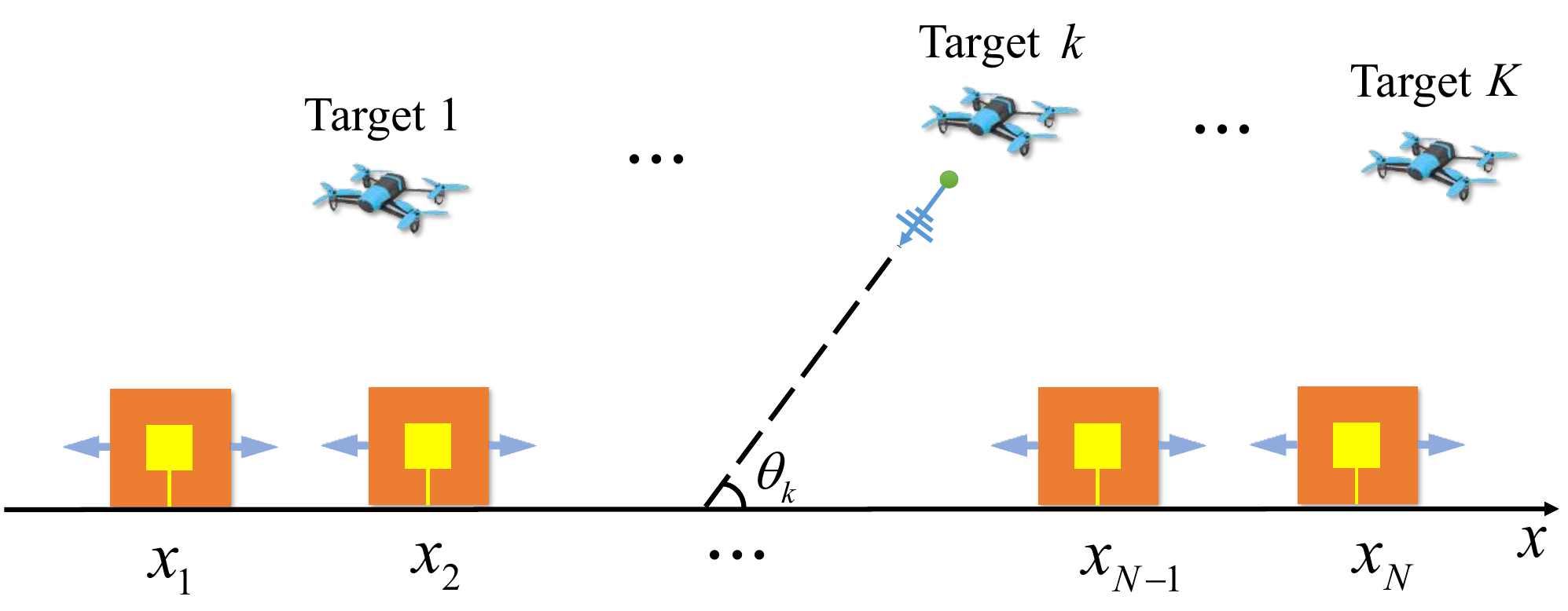}
		\caption{The MA array for targets sensing.}
		\label{MAUAV}
		\vspace{-1em}
	\end{figure}
	In this section, we consider a wireless sensing model with $N$ MAs equipped with driver components to estimate the AoAs of $K$ targets in \Cref{MAUAV}. The MA positions can be adjusted flexibly within the given line region of length $A$. The APV of all $N$ MAs is denoted by $\bm{x}=[x_1,x_2,\ldots,x_N]^T$ with $x_n\in [0,A]$.
	Since the distance between the target and receiver is typically much larger than the size of the region for antenna movement, we consider the far-field channel model from the target to receiver \cite{maMovableAntennaEnhanced2024}. As shown in \Cref{MAUAV}, the AoA of the $k$-th target-receiver LoS path is denoted by $\theta_k\in \mathcal{A}$, where $\mathcal{A} = \left\{ \vartheta \mid \vartheta_{\min} \le \vartheta \le \vartheta_{\max} \right\}$ represents the set of all possible AoAs of the targets. The array steering vector of the MA array for the $k$-th target can thus be expressed as a function of $\bm{x}$ and $ \theta_k$
	\begin{align}
		\bm{a}(\bm{x}, \theta_k)=\big[e^{j\kappa x_1\cos \theta_k},\ldots,e^{j\kappa x_N\cos \theta_k}\big]^T\in \mathbb{C}^{N\times 1},
	\end{align}
	where $\kappa = \frac{2\pi}{\lambda}$ with $\lambda$ denoting the wavelength.
	The $t$-th snapshot of the received signal $\bm{y}$ can be expressed as
	\begin{align}
		\bm{y}(t)=\sum\limits_{k=1}^{K}{\bm{ a}({\bm{x},\theta_k}){{s}_{k}}(t)}+\bm{z}(t), \label{sys1}
	\end{align}
	where ${{s}_{k}}(t)$ is the $k$-th signal and $\bm{z}(t)$ is the additive noise which obeys zero-mean white Gaussian distribution, i.e., $\bm{z}(t)\sim\mathcal{CN}(\bm{0}_{N\times1},\sigma_z^2\bm{I}_{{N}})$. Considering $T$ snapshots, the received signal in \eqref{sys1} can be expressed as
	\begin{align}
		\bm{Y}=\bm{ AS}+\bm{Z}, \label{sys2}
	\end{align}
	where $\bm{Y}={{[ {{\bm{y}}(1)},{{\bm{y}}(2)}, \ldots ,{{\bm{y}}(T)} ]}} \in \mathbb{C}^{N \times T}$, $\bm{S}={{[ {{\bm{s}}(1)},{{\bm{s}}(2)},\ldots ,{{\bm{s}}(T)} ]}\in \mathbb{C}^{K \times T}}$, $\bm{A}=[ \bm{a}(\bm{x},{{\theta }_{1}}),\bm{a}(\bm{x},{{\theta }_{2}}),\ldots ,\bm{a}(\bm{x},{{\theta }_{K}}) ]\in \mathbb{C}^{N \times K}$, and $\bm{Z}=[\bm{z}(1),\bm{z}(2),\ldots,\bm{z}(T)]\in \mathbb{C}^{N \times T}$. We adopt the multiple signal classification (MUSIC) algorithm for estimating the AoAs of the targets. 
	
	First, the covariance matrix of $\bm{Y}$ can be written as
	\begin{align}
		\bm{R}=\frac{1}{T}\bm{YY}^H=\frac{1}{T}\bm{AS}\bm{S}^H\bm{A}^H+{\sigma_z^2}\bm{I}_{N} \in \mathbb{C}^{N \times N},
	\end{align}
	where $\sigma_z^2$ denotes the noise power. Using the standard MUSIC algorithm, we can obtain the singular value decomposition of $\bm{R}$, i.e.,
	\begin{align}
		\bm{R} &= \bm{U}_{s}\bm{\varSigma}_{s}\bm{U}_{s}^H + \bm{U}_{z}\bm{\varSigma}_{z}\bm{U}_{z}^H,
	\end{align}
	where $\bm{U}_s\in \mathbb{C}^{N\times K}$ and $\bm{U}_z\in \mathbb{C}^{N\times (N-K)}$ are the singular vectors of the signal and noise subspaces, respectively. $\bm{\varSigma}_{s}\in \mathbb{R}^{K\times K}$ and $\bm{\varSigma}_{z}\in \mathbb{R}^{(N-K)\times (N-K)}$ are diagonal matrices with the diagonal elements representing the singular values of the signal and noise subspaces, respectively. Then, the spatial spectrum $P(\vartheta)$ is computed as
	\begin{align}\label{Pthtea}
		P(\vartheta)= {1}/{\bm{a}^H(\bm{x},\vartheta)\bm{{U}}_z\bm{U}_z^H\bm{a}(\bm{x},\vartheta)}.
	\end{align}
	Based on the spatial spectrum in \eqref{Pthtea}, the estimated AoAs $\bm{\hat{\theta}}=[\hat{\theta}_1,\ldots,\hat{\theta}_K]^T$ can be obtained corresponding to $K$ principal maxima. 
	To evaluate the mitigation of sidelobes, we employ the SVC gauging the suppression of sidelobes relative to the main lobe as a metric. The SVC between the AoA of the $k$-th target and $\vartheta$ can be given by
	\begin{align}
		\mathrm{SVC}(\bm{x}, \theta_k)&=\frac{1}{N^2}\big|\bm{a}^{H}(\bm{x}, \theta_k)\bm{a}(\bm{x},\vartheta)\big|^2\\
		&=\frac{1}{N^2}\Big|\!\sum_{n=1}^{N}{e^{\kappa x_n(\cos \theta_k-\cos \vartheta)}}\Big|^2\label{SVC}.\nonumber
	\end{align}
	Another well-known metric of wireless sensing performance is the CRB, which can be given by \cite{9573279}
	\begin{align}
		&\mathrm{CRB}(\bm{x},{\theta}_k) \\
		&=\frac{\sigma_z^2}{2}\sum_{t=1}^T\Big[\mathfrak{R}\big\{\bm{\tilde{S}}^H(t)\bm{B}^H(\bm{x},\bm{\theta})\bm{P}_{\bm{A}}^\perp(\bm{x},\bm{\theta})\bm{B}(\bm{x},\bm{\theta})\bm{\tilde{S}}(t)\big\}\Big]_{k,k}^{-1}\nonumber,
	\end{align}
	where $\bm{\tilde{S}}(t)=\diag[s_1(t),\ldots,s_K(t)]$, $\bm{P}_{\bm{A}}^\perp (\bm{x},\bm{\theta})=\bm{I}_{N}-\bm{A}\big(\bm{A}^H\bm{A}\big)^{-1}\bm{A}^H$, and $\bm{B}(\bm{x},\bm{\theta})=[\bm{\dot{a}}_{1},\ldots,\bm{\dot{a}}_{K}]$ with $\bm{\dot{a}}_{k}=\frac{\partial \bm{a}(\bm{x}, \theta_k)}{\partial  \theta_k} = -j\frac{2\pi}{\lambda}\sin\theta_k \diag(\bm{x})\bm{a}_k$.
	
	Our goal is to minimize the SVC corresponding to AoAs distinct from the targets while simultaneously minimizing the CRB via dynamically adjusting the MA positions. The optimization problem can be formulated as
	{\begin{subequations}\label{P1}
			\begin{align}
				(\text{P1}) \enspace \min_{\bm{x}} \enspace & \xi\mathrm{SVC}(\bm{x}, \theta_k) + (1-\xi) \mathrm{CRB}(\bm{x}, {\theta}_k) \label{P1rhoop} \\
				\text{s.t.} \enspace  
				& x_n \in [0, A], \quad n=1,\ldots, N \label{P1const1} \\
				& x_n - x_{n-1} \ge d, \quad n=2,\ldots, N, \label{P1const2}
			\end{align}
		\end{subequations}
		where $0\le \xi \le 1$ is a weighting coefficient for the SVC and the CRB. We note that the problem (\hyperref[P1]{P1}) is non-convex and also very difficult due to the complexity of the objective function. To handle this problem, we propose a joint SVC and CRB optimization algorithm to achieve the optimal tradeoff between the SVC and the CRB.
	\section{Proposed Joint SVC and CRB Optimization Algorithm}
	In this section, we first reformulate the optimization problem (\hyperref[P1]{P1}) and subsequently propose a joint SVC and CRB optimization algorithm based optimization algorithm. Specifically, variable substitution is performed for the SVC to effectively eliminate the a priori unknown AoA from the mathematical expressions. Regarding the CRB, the lower bound of the CRB for the AoAs is derived and position dependent closed form functions are extracted to facilitate the optimization process. Finally, the objective function and constraints are approximated via first order Taylor expansion to ensure convexity, which enables the optimization of $\bm{x}$ through successive iterations.
	\subsection{Problem Transformation}
	\begin{figure}[t]
		\vspace{-1em}
		\centering
		\includegraphics[width=0.75\linewidth]{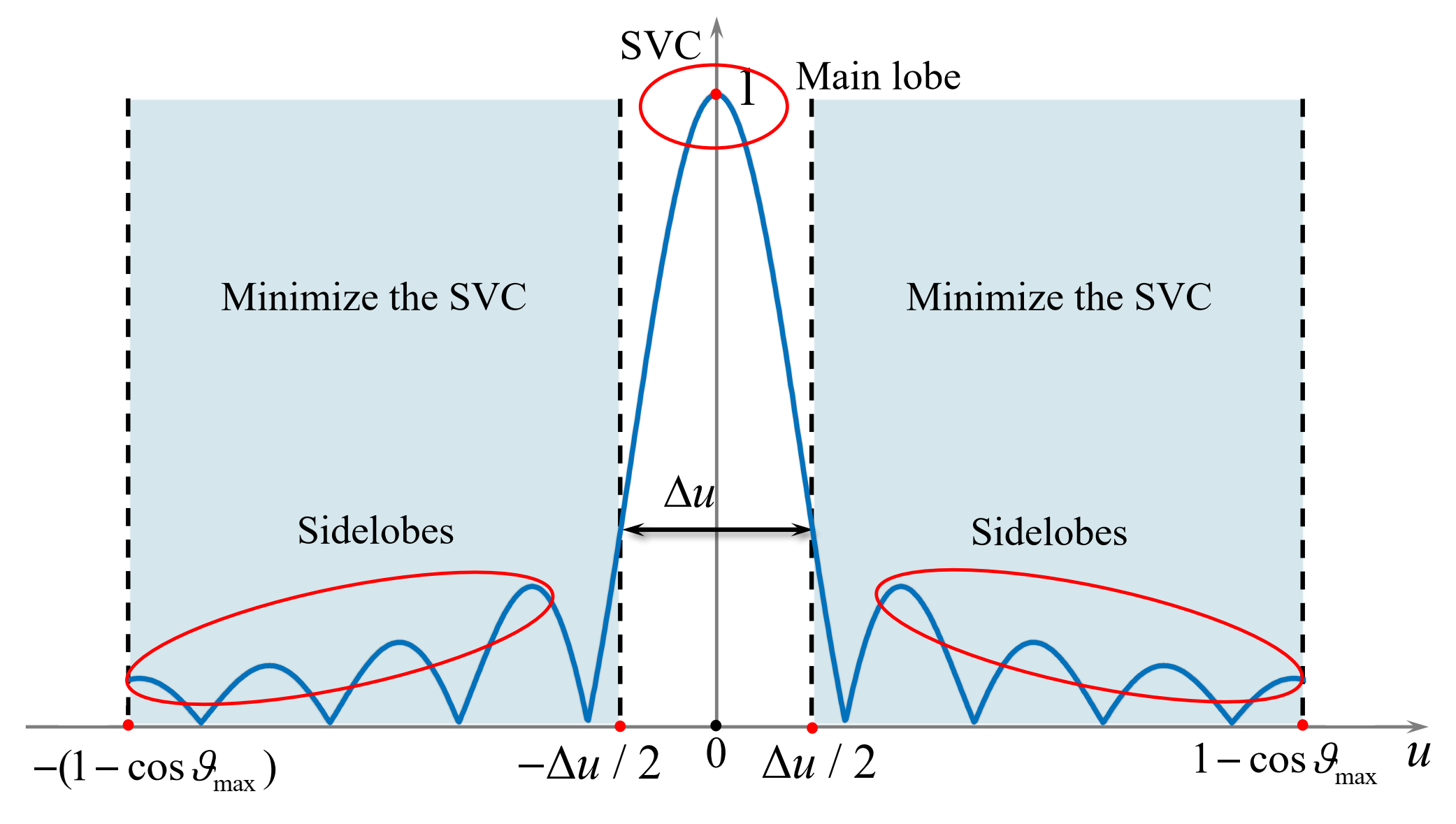}
		\caption{Diagram of $\mathrm{SVC}(\bm{x}, u)$.}
		\label{Fig}
		\vspace{-1em}
	\end{figure}
	
	To improve the tractability of the optimization problem, we first reformulate the spatial metric over the cosine difference domain. Since $\theta_k$ and $\vartheta$ belong to $\mathcal{A}$, the term $(\cos \theta_k - \cos \vartheta)$ for $k = 1, \ldots, K$ spans the interval $[-(1-\cos \vartheta_{\max}), 1-\cos \vartheta_{\max}]$. Although the specific angles $\theta_k$ are unknown a priori, we can define a variable $u$ within this interval to equivalently represent all potential values of the cosine difference. Consequently, the expression of $\mathrm{SVC}(\bm{x}, \theta_k)$ for $k = 1, \ldots, K$ can be transformed to $\mathrm{SVC}(\bm{x}, u)$, which can be given by \eqref{SVChua},
	\begin{figure*}[t]
		\vspace{-1em}
		\begin{align}\label{SVChua}
		\!\!\mathrm{SVC}(\bm{x}, u) &\!=\! \frac{1}{N^2} \big| \sum_{n=1}^{N} e^{j \kappa x_n u} \big|^2\!=\!\frac{1}{N^2} \big(\sum_{n=1}^{N} e^{j \kappa x_n u}\big) \big(\sum_{n=1}^{N} e^{-j \kappa x_n u}\big)\!=\!\frac{1}{N^2}\sum_{n=1}^{N}\sum_{m=1}^{N}e^{j\kappa u(x_n - x_m)}
		\!=\! \frac{1}{N^2}\sum_{n=1}^{N}\sum_{m=1}^{N}\cos\big[{\kappa u(x_n - x_m)}\big],
	\end{align}
	\vspace{-0.5em}
		\hrule
		\vspace{-1em}
	\end{figure*}
	where the fourth equation holds true by the utilization of the Euler's formula.
	The curve of the $\mathrm{SVC}(\bm{x}, u)$ is illustrated in \Cref{Fig}. It can be observed that $\mathrm{SVC}(\bm{x}, u)$ deviates from an ideal impulse function. Specifically, it reaches its peak value of exactly $1$ at $u=0$ and inevitably remains close to $1$ within the vicinity of the main lobe. Furthermore, $\mathrm{SVC}(\bm{x}, u)$ exhibits numerous sidelobes. To effectively suppress these sidelobes while preserving the main lobe characteristics, we define a main lobe width denoted by $\Delta u$. We purposely avoid optimizing the SVC within the interval $[-\Delta u/2, \Delta u/2]$ to protect the main lobe integrity and exclusively minimize the SVC outside this region to suppress the sidelobes. Recognizing that $\mathrm{SVC}(\bm{x}, u)$ is an even function with respect to $u$, we only need to consider the optimization over the positive half interval $\mathcal{U} = [\Delta u/2, 1-\cos \vartheta_{\max}]$.
	
	Next, we proceed to simplifying the CRB. It should be noted that $\mathrm{CRB}(\bm{x}, \theta_k)$ is a function of $\bm{x}$. However, due to the presence of numerous terms dependent of $\bm{x}$, the CRB cannot be further formulated into a tractable form conducive to solving the optimization problem. In the following, we present Theorem \ref{theorem1} to derive a lower bound on the CRB by explicitly separating the terms related to $\bm{x}$.
	\begin{theorem}\label{theorem1}
		For AoA estimation, the lower bound of $\mathrm{CRB}(\bm{x},{\theta_k})$ can be given by 
		\begin{align}
			\mathrm{CRB}(\bm{x}, {\theta}_k)\geq  {\frac{\sigma_z^2 \lambda^2}{8\pi^2 N\sin ^2 \theta_k TV{(\bm{x})}}},
		\end{align}
		where $V(\bm{x})=\frac{1}{N}\sum_{n=1}^{N}{x^2_n}-\big(\frac{1}{N}\sum_{n=1}^{N}{x_n}\big)^2$.
	\end{theorem}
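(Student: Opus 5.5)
The plan is to chain three elementary inequalities: a diagonal bound on the inverse of the Fisher-type matrix, a projector ordering that reduces $\bm{P}_{\bm{A}}^\perp$ to the single-target projector, and an explicit evaluation of the resulting scalar in terms of $V(\bm{x})$. Write
\begin{align}
\bm{M}=\sum_{t=1}^T\mathfrak{R}\big\{\bm{\tilde{S}}^H(t)\bm{B}^H\bm{P}_{\bm{A}}^\perp\bm{B}\bm{\tilde{S}}(t)\big\}.
\end{align}
Then $\mathrm{CRB}(\bm{x},\theta_k)=\frac{\sigma_z^2}{2}[\bm{M}^{-1}]_{k,k}$. I will assume $\bm{M}$ is positive definite, which is implicit in the CRB being finite.

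\textbf{Step 1 (diagonal bound).} For any Hermitian positive definite $\bm{M}$ one has $[\bm{M}^{-1}]_{k,k}\ge 1/[\bm{M}]_{k,k}$. This follows from the Cauchy--Schwarz inequality $(\bm{e}_k^T\bm{e}_k)^2\le(\bm{e}_k^T\bm{M}\bm{e}_k)(\bm{e}_k^T\bm{M}^{-1}\bm{e}_k)$, or equivalently from the Schur complement. Because $\bm{\tilde{S}}(t)$ is diagonal, the diagonal entry simplifies to
\begin{align}
[\bm{M}]_{k,k}=\sum_{t=1}^T|s_k(t)|^2\,\bm{\dot{a}}_k^H\bm{P}_{\bm{A}}^\perp\bm{\dot{a}}_k .
\end{align}
The quadratic form is real since $\bm{P}_{\bm{A}}^\perp$ is Hermitian.

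\textbf{Step 2 (projector ordering).} Since $\mathrm{span}(\bm{a}_k)\subseteq\mathrm{span}(\bm{A})$, we have $\bm{P}_{\bm{A}}^\perp\preceq\bm{P}_{\bm{a}_k}^\perp=\bm{I}_N-\frac{1}{N}\bm{a}_k\bm{a}_k^H$. Hence
\begin{align}
\bm{\dot{a}}_k^H\bm{P}_{\bm{A}}^\perp\bm{\dot{a}}_k\le \bm{\dot{a}}_k^H\bm{\dot{a}}_k-\frac{1}{N}\big|\bm{a}_k^H\bm{\dot{a}}_k\big|^2 .
\end{align}
Substituting $\bm{\dot{a}}_k=-j\kappa\sin\theta_k\diag(\bm{x})\bm{a}_k$ and using $|[\bm{a}_k]_n|=1$ gives
\begin{align}
\bm{\dot{a}}_k^H\bm{\dot{a}}_k=\kappa^2\sin^2\theta_k\sum_{n=1}^N x_n^2, \qquad \bm{a}_k^H\bm{\dot{a}}_k=-j\kappa\sin\theta_k\sum_{n=1}^N x_n .
\end{align}
The right-hand side therefore equals $\kappa^2\sin^2\theta_k\,N V(\bm{x})$, where $V(\bm{x})$ is exactly the position variance appearing in Theorem~\ref{theorem1}.

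\textbf{Step 3 (signal power and assembly).} To produce the factor $T$, I adopt the unit-power normalization $|s_k(t)|^2=1$, or more generally unit average power $\frac{1}{T}\sum_t|s_k(t)|^2=1$, which I would state explicitly. Then $[\bm{M}]_{k,k}\le T\kappa^2 N\sin^2\theta_k V(\bm{x})$. Combining with Step 1 and $\kappa^2=4\pi^2/\lambda^2$ yields
\begin{align}
\mathrm{CRB}(\bm{x},\theta_k)\ge \frac{\sigma_z^2}{2}\cdot\frac{\lambda^2}{4\pi^2 T N\sin^2\theta_k V(\bm{x})}=\frac{\sigma_z^2\lambda^2}{8\pi^2 N\sin^2\theta_k T V(\bm{x})},
\end{align}
which is the claimed bound.

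\textbf{Main obstacle.} The delicate step is Step 2: justifying the Loewner ordering $\bm{P}_{\bm{A}}^\perp\preceq\bm{P}_{\bm{a}_k}^\perp$ from the nested subspaces, since a projector onto a larger subspace has a smaller orthogonal complement. A secondary point is making the signal-power normalization explicit, because the stated bound contains $T$ but no power term.
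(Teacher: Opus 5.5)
Your proposal is correct and follows essentially the same route as the paper's proof: the diagonal bound $[\bm{F}^{-1}]_{k,k}\ge 1/F_k$, the ordering $\bm{P}_{\bm{A}}^\perp\preceq\bm{P}_{\bm{a}_k}^\perp$ from the nested spans, and explicit evaluation of $\bm{\dot{a}}_k^H\bm{P}_{\bm{a}_k}^\perp\bm{\dot{a}}_k$ in terms of $V(\bm{x})$. Your explicit normalization $\sum_{t}|s_k(t)|^2=T$ states an assumption that the paper uses without comment when it replaces $\sum_t|s_k(t)|^2$ by $T$ in its expression for $F_k$.
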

	\begin{proof}
		The Fisher information matrix (FIM) of $\bm{\theta}$ is given by \cite{9573279}
		\begin{align}
			\bm{F}=\frac{2}{\sigma_z^2}\sum_{t=1}^T\tr\left\{\mathfrak{R}\Big\{\bm{\tilde{S}}^H(t)\bm{B}^H\bm{P}_{\bm{A}}^\perp\bm{B}\bm{\tilde{S}}(t)\Big\}\right\}.
		\end{align}
		Based on the properties of the inverse matrix, we have
		\begin{align}
			\mathrm{CRB}(\bm{x}, \theta_k)={\big[\bm{F}^{-1}\big]_{k,k}} \geq  {{F}_k^{-1}},
		\end{align}
		where $F_k$ is the $k$-th main diagonal element corresponds to the parameter $\cos \theta_k$ of the $k$-th target, which can be explicitly expressed as \eqref{FFk},
		\begin{figure*}
			\vspace{-1em}
			\begin{align}\label{FFk}
			F_k &= \Big[ \frac{2}{\sigma_z^2}\sum_{t=1}^T\mathfrak{R}\Big\{\bm{\tilde{S}}^H(t)\bm{B}^H\bm{P}_{\bm{A}}^\perp\bm{B}\bm{\tilde{S}}(t)\Big\} \Big]_{k,k}= \frac{2}{\sigma_z^2}\sum_{t=1}^T \mathfrak{R}\Big\{ \big( \bm{B}\bm{\tilde{S}}(t) \big)_{:,k}^H \bm{P}_{\bm{A}}^\perp \big( \bm{B}\bm{\tilde{S}}(t) \big)_{:,k} \Big\} \nonumber\\
			&= \frac{2}{\sigma_z^2}\sum_{t=1}^T \mathfrak{R}\Big\{ \big( s_k(t)\bm{\dot{a}}_k \big)^H \bm{P}_{\bm{A}}^\perp \big( s_k(t)\bm{\dot{a}}_k \big) \Big\}= \frac{2}{\sigma_z^2}\sum_{t=1}^T \mathfrak{R}\Big\{ |s_k(t)|^2 \bm{\dot{a}}_k^H \bm{P}_{\bm{A}}^\perp \bm{\dot{a}}_k \Big\}= \frac{2}{\sigma_z^2} T \mathfrak{R} \Big\{ \bm{\dot{a}}_k^H \bm{P}_{\bm{A}}^{\perp} \bm{\dot{a}}_k \Big\},
		\end{align}
		\vspace{-0.5em}
			\hrule
			\vspace{-1em}
		\end{figure*}
		where we define the single-target projection matrix as $\bm{P}_{\bm{a}_k}^{\perp} = \bm{I}_{N} - \frac{1}{N}\bm{a}_k \bm{a}_k^H$. Given that $\mathrm{span}(\bm{a}_k) \subseteq \mathrm{span}(\bm{A})$, their orthogonal complements satisfy the reverse inclusion, i.e., $\mathrm{span}(\bm{A})^{\perp} \subseteq \mathrm{span}(\bm{a}_k)^{\perp}$, which directly establishes the positive semi-definite relation $\bm{P}_{\bm{a}_k}^{\perp} \succeq \bm{P}_{\bm{A}}^{\perp}$. Substituting it into the expression of ${F}_k$, we obtain the upper-bound FIM element of ${\bar{F}}_k$, i.e.,
		\begin{align}
			{F}_k \leq \frac{2}{\sigma_z^2} T \mathfrak{R} \left\{ \bm{\dot{a}}_k^H \bm{P}_{\bm{a}_k}^{\perp} \bm{\dot{a}}_k \right\} \triangleq {\bar{F}}_k
			\label{barFk}.
		\end{align}
		By submitting the expression of $\bm{\dot{a}}_k$, we obtain
		\begin{align}
			\!\!\!\!\mathrm{CRB}(\bm{x}, \theta_k)&\ge{{\bar{F}}_k^{-1}} = 
			{\frac{\sigma_z^2 \lambda^2}{8\pi^2 N \sin^2\theta_k TV (\bm{x})}},
		\end{align}
		which establishes Theorem \ref{theorem1}.
		\vspace{-0.5em}
	\end{proof}
	Theorem \ref{theorem1} derives a lower bound on $\mathrm{CRB}(\bm{x})$, which facilitates the isolation of the $\bm{x}$-dependent scalar function $1/V(\bm{x})$. The equality holds if and only if $K=1$, in which case the bound reduces to the single-target CRB  \cite{maMovableAntennaEnhanced2024}. For multi-target scenarios, the presence of multiple targets only scales the coefficient of $1/V(\bm{x})$ without altering the functional form of $V(\bm{x})$ itself. Thus, the derived lower bound serves as a generalized form of the one in \cite{maMovableAntennaEnhanced2024}. Moreover, this bound reveals that the CRB is inversely proportional to $V(\bm{x})$, implying that a larger $V(\bm{x})$ leads to a smaller CRB.
	
	A fundamental prerequisite in multi-objective formulations is ensuring commensurate scaling among distinct metrics to prevent scale-induced domination in the combined cost function. Since the auxiliary variable $\tau$ bounds the spatial correlation coefficient, its value is inherently restricted to the interval $[0, 1]$, thereby obviating the need for further normalization. Conversely, the dynamic range of $V(\bm{x})$ is intrinsically tied to the physical array aperture and system parameters, meaning its scale can arbitrarily deviate from $[0, 1]$. Consequently, it is imperative to normalize $V(\bm{x})$ to a standard $[0, 1]$ interval prior to applying the linear weights. The lower bound and upper bound of $V(\bm{x})$ can be given by
	\begin{align}
		V_{\text{low}}=\frac{(N^2-1)d^2}{12}\le V(\bm{x}) < \frac{A^2}{4}=V_{\text{up}},
	\end{align}
	where the lower bound $V_{\text{low}}$ is strictly achieved when all antennas are most densely packed into a uniform linear array (ULA) configuration, constrained by the minimum spacing $d$ and the absolute upper bound $V_{\text{up}}$ is governed by Popoviciu's inequality on the variable $\bm{x}$ in $[0, A]$. Having established the explicit dynamic range $[V_{\text{low}}, V_{\text{up}})$, we can rigorously map the physically dependent aperture variance into a dimensionless, normalized metric $\tilde{V}(\bm{x}) \in [0, 1)$. This standard min-max scaling is formulated as follows
	\begin{align}
		\tilde{V}(\bm{x}) = \frac{V(\bm{x}) - V_{\text{low}}}{V_{\text{up}} - V_{\text{low}}}.
	\end{align}
	The original optimization problem is formulated as
	\begin{subequations}\label{P2} 
		\begin{align}
			(\text{P2}) \enspace \min_{\bm{x}} \enspace & \xi\mathrm{SVC} (\bm{x}, u) -(1-\xi) \tilde{V}(\bm{x}) \label{rhoopa} \\
			\text{s.t.} \enspace 
			& \tilde{V}(\bm{x}) \ge \tilde{V}_{\text{th}} \label{const_vtha} \\
			& x_n \in [0, A], \quad n=1,\ldots, N \label{const1a} \\
			& x_n - x_{n-1} \ge d, \quad n=2,\ldots, N, \label{const2a}
		\end{align}
	\end{subequations}
	where the constraint \eqref{const_vtha} serves as a requirement for the estimation performance and $V_{\text{th}}$ is a predefined variance threshold. 
	To solve this problem efficiently, we address the positive half interval associated with the variable $u$ by discretizing $\mathcal{U}$ into a finite set $\{u_l\}_{l=1}^L$, where $u_l \in [\Delta u/2, 1-\cos \vartheta_{\max}]$ can be given by
	\begin{align}
		u_l = \frac{\Delta u}{2} + \frac{l - 1}{L - 1} \Big( 1 - \cos \vartheta_{\max} - \frac{\Delta u}{2} \Big), l = 1, 2, \dots, L,
	\end{align}
where $L$ denotes the number of the discrete points of $\mathcal{U}$. We aim to optimize the worst case performance over this finite set. Specifically, we introduce an auxiliary variable $\tau$ to serve as the upper bound for the $\mathrm{SVC}$ term across all discrete points in $\mathcal{U}$. By substituting the worst case term in the objective function with $\tau$ and adding a new set of inequality constraints, we can transform (\hyperref[P2]{P2}) into a more tractable form
	\begin{subequations}\label{P3} 
		\begin{align}
			(\text{P3}) \enspace \min_{\bm{x}} \enspace & \xi\tau -(1-\xi) \tilde{V}(\bm{x}) \label{rhoop} \\
			\text{s.t.} \enspace  & \mathrm{SVC}(\bm{x}, u_l) \le \tau, \enspace l=1,\ldots,L \label{const_tau} \\
			& \tilde{V}(\bm{x}) \ge \tilde{V}_{\text{th}} \label{const_vth} \\
			& x_n \in [0, A], \quad n=1,\ldots, N \label{const1} \\
			& x_n - x_{n-1} \ge d, \quad n=2,\ldots, N , \label{const2}
		\end{align}
	\end{subequations}
	where the constraint \eqref{const_tau} ensures that the maximum of $\mathrm{SVC}(\bm{x}, u_l)$ over the considered grid points is strictly upper bounded by $\tau$. Constraints \eqref{const1} and \eqref{const2} represent the physical boundaries and spacing limits of MAs. Note that the problem (\hyperref[P3]{P3}) remains nonconvex. In the following we shall introduce an SCA-based algorithm to optimize the positions of MAs.
	
	\vspace{-0.5em}
	\subsection{Proposed SCA-Based Algorithm}
	
	It can be noted that the objective function \eqref{rhoop} and the constraints \eqref{const_tau} and \eqref{const_vth} are highly nonconvex, which makes the problem (\hyperref[P3]{P3}) mathematically intractable to solve directly. A major mathematical hurdle arises in the mutual coherence constraint \eqref{const_tau}. 
	To circumvent this issue, we can construct a surrogate function to locally approximate $\mathrm{SVC}(\bm{x}, u_l)$. To be specific, we first introduce a property that the second order Taylor expansion of $\cos (\alpha)$ at $\alpha_0$ is
	\begin{align}
		\cos(\alpha_0) \!-\! \sin (\alpha_0) (\alpha-\alpha_0) \!-\! \frac{1}{2}\cos(\alpha_0)(\alpha-\alpha_0)^2.
	\end{align}
	Since $\cos(\alpha_0) \ge -1$, a surrogate function for $\cos (\alpha)$ can be given by
	\begin{align}\label{cosa}
		\!\!\!\!\!\cos (\alpha) \!\le\! \cos (\alpha;\! \alpha_0) \!\triangleq\! \cos(\alpha_0) \!-\! \sin (\alpha_0) (\alpha\!-\!\alpha_0) \!+\! \frac{(\alpha\!-\!\alpha_0)^2}{2}.\!\!
	\end{align}
Then, for $\mathrm{SVC}(\bm{x}, u_l)$, we define $g_l(x_n, x_m) = \kappa u_l(x_n - x_m)$, its expression can be further given by
	\begin{align}\label{SVCcos}
		\mathrm{SVC}(\bm{x}, u_l) = \frac{1}{N^2} \sum_{n=1}^{N}\sum_{m=1}^{N} \cos [g_l (x_n, x_m)].
	\end{align}
	 Based on \eqref{cosa}, by replacing $\alpha$ with $g_l(x_n, x_m)$ and $\alpha_0$ with $g_l(x_n^i, x_m^i)$ for \eqref{SVCcos}, we can construct a surrogate function for $\mathrm{SVC}(\bm{x}, u_l)$ at the $i$-th iteration
	\begin{align}\label{SVCover}
		\mathrm{SVC}(\bm{x}, u_l) &\le \frac{1}{N^2}\sum_{n=1}^{N}\sum_{m=1}^{N}\cos \big[g_l(x_n \!-\! x_m); g_l (x_n^{i} \!-\!x_m^{i})\big] \nonumber\\ &\triangleq \overline{\mathrm{SVC}}(\bm{x}, u_l; \bm{x}^{i})
		 = \frac{1}{2}\bm{x}^T \bm{G}_l \bm{x} +  \bm{\beta}_l^T \bm{x} +\gamma_l,
	\end{align}
	where $\bm{G}_l \in \mathbb{R}^{N \times N}$, $\bm{\beta}_l \in \mathbb{R}^{N \times 1}$, and $\gamma_l \in \mathbb{R}$ are respectively given by
	\begin{align}
		\bm{G}_l &=\frac{2\kappa^2 u_l^2}{N^2}(N\bm{I}_N - \bm{1}_{N}\bm{1}_{N}^T),\\
		\![\bm{\beta}_l]_n &\!=- \frac{2}{N^2} \sum_{m=1}^{N}\big\{\kappa u_l \sin \big[g_l(x_n^{i},x_m^{i}) \big] \!+\! \kappa^2 u_l^2(x_n^{i} -x_m^{i})\big\},\!\!\\
		\gamma_{l}&=\frac{1}{N^2}\sum_{n=1}^N\sum_{m=1}^N\Big\{\cos\big[g_l(x_n^i, x_m^i)\big]\\&\enspace +g_l(x_n^i, x_m^i)\sin\big[g_l(x_n^i, x_m^i)\big]+\frac{1}{2}g_l^2(x_n^i,x_m^i)\Big\}.\nonumber
	\end{align}
	Note that \eqref{SVCover} is a standard quadratic form and can be solved by the convex optimization tools.
	
	Due to $\tilde{V} (\bm{x})$ is convex with respect to $\bm{x}$, we can perform the first order Taylor expansion of the $i$-th iteration at $\bm{x}^{i}$, i.e., 
	\begin{align} 
		\tilde{V}(\bm{x}) \ge \underline{\tilde{V}}(\bm{x}; \bm{x}^{i}) \triangleq\! \tilde{V}(\bm{x}^{i}) + \nabla \tilde{V}(\bm{x}^{i})^T (\bm{x}    - \bm{x}^{i})\label{V2Tay},
	\end{align}
	where $\nabla \tilde{V}(\bm{x}^{i}) = \frac{2(\bm{x}^{i} - \bar{x}^{i}\bm{1}_{N\times 1})}{N(V_{\text{up}}-V_{\text{low}})}$ with $\bar{x}^{i}=\frac{1}{N}\sum_{n=1}^{N}{x_{n}^{i}}$. 
	Thus, at the $i$-th SCA iteration, the original non-convex optimization problem (\hyperref[P3]{P3}) can be simplified as the following convex problem
	\begin{subequations}\label{P4}
		\begin{align}
			(\text{P4}) \enspace \min_{\bm{x}} \enspace & \xi\tau -(1-\xi) \underline{\tilde{V}}(\bm{x}; \bm{x}^{i}) \label{P2_obj} \\
			\text{s.t.} \enspace  & \overline{\mathrm{SVC}}(\bm{x}, u_l; \bm{x}^{i}) \le \tau, \enspace l=1,\ldots,L \label{P2_const_tau} \\
			& \underline{\tilde{V}}(\bm{x}; \bm{x}^{i}) \ge \tilde{V}_{\text{th}} \label{P2_const_vth} \\
			& x_n  \in [0, A], \quad n=1,\ldots, N \label{P2_const1} \\
			& x_n  - x_{n-1} \ge d, \quad n=2,\ldots, N. \label{P2_const2}
		\end{align}
	\end{subequations}
	For the problem (\hyperref[P4]{P4}), the objective function \eqref{P2_obj} and constraints \eqref{P2_const_vth}-\eqref{P2_const2} are linear, and \eqref{P2_const_tau} is a standard second order cone constraint. Therefore, (\hyperref[P4]{P4}) can be efficiently optimally solved by MATLAB CVX toolbox \cite{mosek2019}. The proposed algorithm is summarized in Algorithm \ref{Algorithm 1} and continues to refine the MA positions as long as the iteration index $i$ is no greater than the maximum limit $I$, or the relative residuals of the SVC and $\tilde{V}(\bm{x})$ remain above the tolerance $\delta$.
	
	\begin{algorithm}[t]
		\renewcommand{\algorithmicrequire}{\textbf{Input:}}
		\renewcommand{\algorithmicensure}{\textbf{Output:}}
		\caption{Proposed SCA-Based Optimization Algorithm}
		\label{Algorithm 1}
		\begin{algorithmic}[1]
			\Require 
			The number of MAs $N$, the movable region $A$, the minimum spacing $d$, the discretized spatial frequency set $\mathcal{U} = \{{u}_l\}_{l = 1}^{L}$, the maximum iteration $I$, and the parameters $\{\tilde{V}_{\text{th}}, \xi, \delta\}$.
			\State \textbf{Initialization:} Set iteration index $i = 1$, and the initial MA positions $\bm{x}^{0}$. 
			\While {not converged and $i \leq I$}
			\State Construct the surrogate functions \eqref{SVCover} and \eqref{V2Tay};
			\State Solve the standard convex problem (\hyperref[P4]{P4});
			\State $i \leftarrow i+1$;
			\EndWhile
			\Ensure Return the optimized MA positions $\bm{x}^\star$.
		\end{algorithmic}
	\end{algorithm}
	
	\section{Simulation Results}
	\label{Sec:Simulation}
	In this section, we evaluate the performance of the proposed joint SVC and CRB optimization algorithm through numerical simulations. The movable region is set to $A = 12\lambda$, the antenna minimum space is set to $\lambda/2$, the number of MAs is set to $N = 12$, and the number of snapshots is $T=10$, and the AoA is $\bm{\theta}=[45.1^{\circ},60.5^{\circ},120.9^{\circ}]^T$, $ \vartheta_{\min}=30^{\circ}$, and $\vartheta_{\max}=150^{\circ}$. The signal-to-noise ratio (SNR) is given by $\mathrm{SNR} = 10\lg{\sigma_k^2}/{\sigma_z^2}$. The considered benchmark schemes for setting MA positions are listed as follows: 
	\begin{itemize}
		\item \textbf{ULA}: Applies the conventional ULA with the antenna minimum space $d$, i.e., $\bm{x}_{\text{ULA}}=[0,d,\ldots,(N-1)d]^T$.
		\item \textbf{USA}: Applies the conventional uniform sparse array (USA) with the antenna space $d_{\text{USA}}=A/(N-1)$, i.e., $\bm{x}_{\text{USA}}=[0,d_{\text{USA}},\ldots,(N-1)d_{\text{USA}}]^T$.
		\item \textbf{CMA}: Applies the conventional MA (CMA) positions $\bm{x}_{\mathrm{CMA}}$ in \cite{maMovableAntennaEnhanced2024}.
		\item \textbf{PMA}: Applies the proposed MA (PMA) positions of the SCA-based algorithm.
	\end{itemize}
	The diagram of the antenna positions of each benchmark schemes is shown in \Cref{MAPos}.
	the parameters in Algorithm \ref{Algorithm 1} is set as $\xi=0.5$, $\delta=10^{-4}$, $\bm{x}^0 = \bm{x}_{\mathrm{CMA}}$, $L = 241$, $\Delta u =0.1$, and  $\tilde{V}_{\text{th}}=\frac{\var(\bm{x}_{\text{USA}})-V_{\text{low}}}{V_{\text{up}}-V_{\text{low}}}$, where $\var(\bm{x}_{\text{USA}})$ is designed to guarantee the estimation fidelity of the proposed MA array by establishing the USA as a rigorous performance floor for the array aperture.
	
	\begin{figure}[t]
		\vspace{-0.75em}
		\centering
		\includegraphics[width=0.9\linewidth]{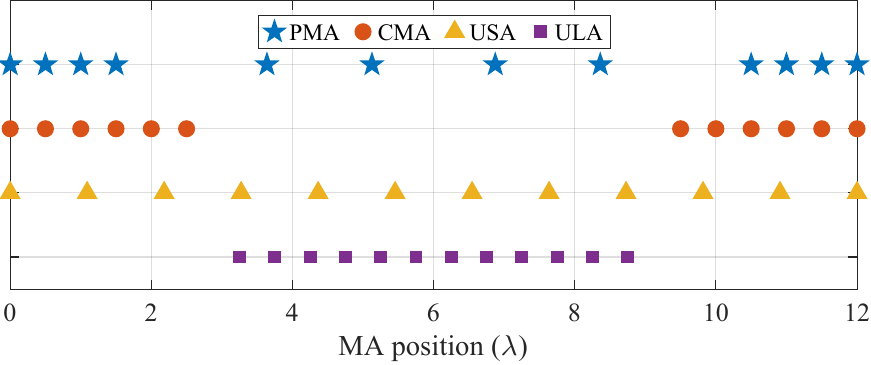}
		\caption{The diagram of antenna positions for benchmark schemes.}
		\label{MAPos}
	\end{figure}
\begin{figure}[t]
	\centering
	\includegraphics[width=0.7\linewidth]{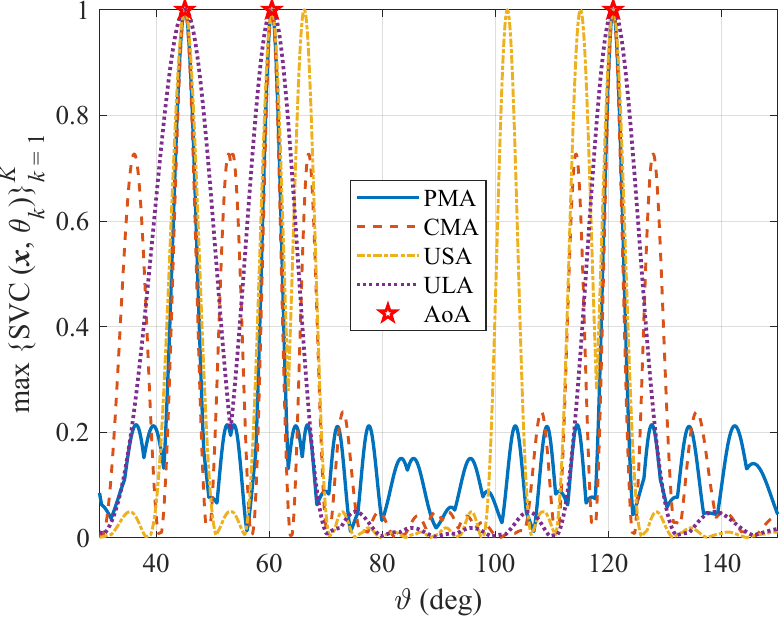}
	\caption{$\mathrm{SVC}$ versus $\vartheta$.}
	\label{beam}
	\vspace{-1em}
\end{figure}

\begin{figure}[t]
	\vspace{-1em}
	\centering
	\includegraphics[width=0.7\linewidth]{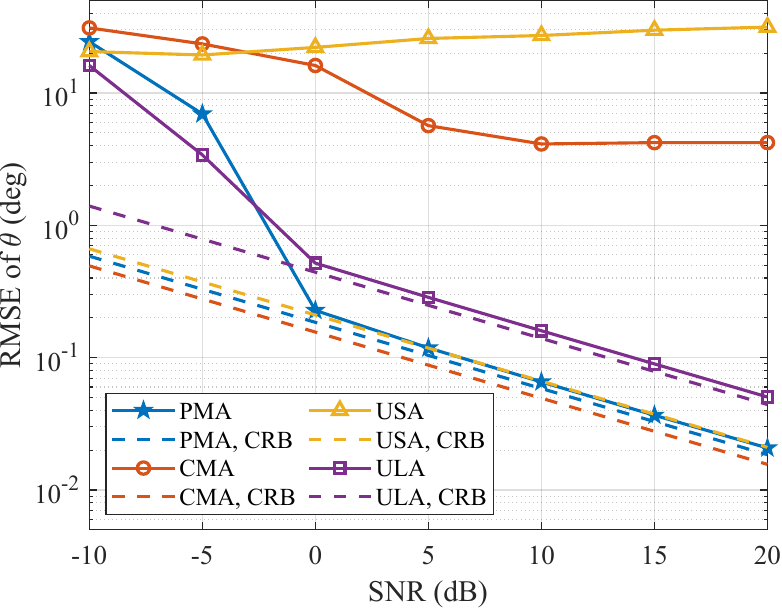}
	\caption{RMSE versus SNR.}
	\label{RMSE-SNR}
	\vspace{-1em}
\end{figure}
\begin{figure}[t]
	\centering
	\includegraphics[width=0.7\linewidth]{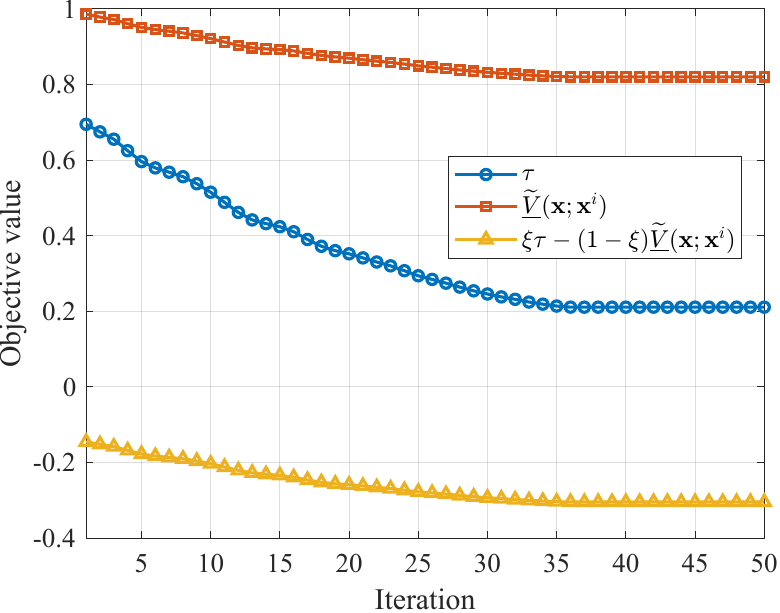}
	\caption{Objective value versus iteration.}
	\label{iter}
	\vspace{-1em}
\end{figure}

			\Cref{beam} calculates the $\mathrm{SVC}$ for each case of MA positions with the AoA range $\vartheta\in \mathcal{A}$. To fully demonstrate the superiority of the proposed algorithm, we plot the AoA that exhibits the maximum $\mathrm{SVC}$ among the $K$ targets for each $\vartheta$, i.e., $\max \{\mathrm{SVC}(\bm{x},\theta_k)\}_{k=1}^K$. One observation is that the proposed MA positions can yield narrower main lobes in the case of multiple targets sensing compared to the benchmark schemes than USA and ULA. Another observation is that compared to MA in \cite{maMovableAntennaEnhanced2024} and USA, the proposed antenna positions exhibit suppressed sidelobes by maintaining the $\mathrm{SVC}$ below $0.1$. This reduction in spatial ambiguity minimizes the probability of misidentifying sidelobes as main lobes, thereby ensuring reliable estimation performance for AoA estimation for multiple targets.
			
			\Cref{RMSE-SNR} compares the RMSE of the AoA estimation versus SNR for different schemes. We can observe that the RMSE of the proposed MA scheme demonstrates remarkable superiority, successfully bypassing the catastrophic spatial ambiguity that plagues both the MA in \cite{maMovableAntennaEnhanced2024} and the USA. By strictly suppressing the SVC in the first optimization stage, the algorithm significantly reduces the threshold SNR, allowing the RMSE to converge to the CRB much earlier than other benchmarks. It is noteworthy that the CRB of the proposed MA exhibits a negligible gap compared to the MA in \cite{maMovableAntennaEnhanced2024}. This marginal difference is due to balancing the suppression of SVC and the maximization of the array aperture. By prioritizing the prevention of spatial ambiguity to ensure global stability, the algorithm achieves superior reliability across a wider SNR range with only a minimal sacrifice in peak estimation precision.
			
			\Cref{iter} illustrates the evolution of the maximum SVC $\tau$ and $\underline{\tilde{V}}(\bm{x})$ with respect to the number of iterations. As the iteration count increases, $\tau$ decreases rapidly. Given that our initialization already places $\underline{\tilde{V}}(\bm{x})$ at its optimum, it undergoes a slow descent as a consequence of the balancing mechanism.  Meanwhile, the total objective function undergoes a decline and subsequently stabilizes at its converged value.
			
			\section{Conclusion}
			In this paper, we investigated the AoA estimation problem for wireless sensing systems utilizing MA arrays. To enhance the estimation performance and accuracy, we formulated a tractable joint optimization framework integrating the SVC and the CRB. By introducing a proxy variable and deriving a generalized CRB lower bound, we developed an SCA-based position optimization algorithm. This algorithm effectively tackles nonconvexities via first order Taylor expansions. Simulation results confirmed that the proposed MA positions significantly suppress sidelobes and exhibits superior AoA estimation performance.

			
			
			%
			\bibliographystyle{IEEEtran}
			\bibliography{mybib_Abbreviation_1}

@misc{ye2026selfcalibrationdoaestimationmovable,
	title={Self-Calibration {DOA} Estimation for Movable Antenna Systems with Antenna Position Errors}, 
	author={Chengzhi Ye and Ruoyu Zhang and Wen Wu and Byonghyo Shim},
	year={2026},
	eprint={2605.23140},
	archivePrefix={arXiv},
	primaryClass={eess.SP},
	url={https://arxiv.org/abs/2605.23140}, 
}

@ARTICLE{ye2026RAsensing,
	author={Ye, Chengzhi and Zhang, Ruoyu and Du, Jincheng and Ma, Wenyan and Wu, Qingqing and Wu, Wen and Zhang, Rui},
	journal={IEEE Wireless Communications Letters},
	title={Rotatable Antenna-Enhanced Wireless Sensing with Uniform Sparse Array via Tensor Decomposition},
	year={2026},
	volume={},
	number={},
	pages={1-5},
	doi={10.1109/LWC.2026.3722674},
	note={doi: 10.1109/LWC.2026.3722674}
}

@ARTICLE{ChengzhiTWCMIMOOFDM,
	author={Ye, Chengzhi and Zhang, Ruoyu and Yao, Lei and Guan, Xinrong and Zhang, Yu and Wu, Wen and Zhang, Rui},
	journal={IEEE Trans. Wireless Commun.}, 
	title={Tensor Decomposition-Based Wireless Sensing for {MIMO}-{OFDM} {ISAC} via Flexible Spatial-Temporal-Spectral Optimization}, 
	year={2026},
}

@ARTICLE{ChengzhiIOTJ20262D,
	author={Ye, Chengzhi and Zhang, Ruoyu and Yao, Lei and Wu, Wen},
	journal={IEEE Internet Things J.}, 
	title={Joint Shape-Position Optimization Enhanced {2D} {DOA} Estimation in Movable Antenna Systems}, 
	year={2026},
	volume={},
	number={},
	pages={1-1},
	doi={10.1109/JIOT.2026.3689968}}

@ARTICLE{11456856,
	author={Chen, Guangyi and Zhang, Ruoyu and Guan, Xinrong and Wu, Qingqing and Ning, Boyu and Zhang, Yu and Wu, Wen and Zhang, Rui},
	journal={IEEE Trans. Wireless Commun.}, 
	title={Movable Antenna-Enabled {MIMO} Integrated Sensing and Communication: A Unified Mutual Information Framework}, 
	year={2026},
	volume={25},
	number={},
	pages={14440-14454},
	doi={10.1109/TWC.2026.3675871}}

@ARTICLE{11218873,
	author={Chen, Guangyi and Zhang, Ruoyu and Guan, Xinrong and Hu, Guojie and Wu, Qingqing and Wu, Wen},
	journal={IEEE Internet Things J.}, 
	title={Energy Efficiency Maximization for Multiuser Communications With Movable Antennas: Joint Beamforming and Antenna Position Design}, 
	year={2026},
	volume={13},
	number={1},
	pages={868-881},
	doi={10.1109/JIOT.2025.3626139}}

@ARTICLE{Chengzhi2025,
	author={Ye, Chengzhi and Zhang, Ruoyu and Hu, Changcheng and Yao, Lei and Wu, Wen and Yuen, Chau},
	journal={IEEE Wireless Commun. Lett.}, 
	title={Robust {DOA} Estimation for Movable Antenna Arrays With Partial Gain and Phase Errors}, 
	year={2026},
	volume={15},
	number={},
	pages={545-549},
	doi={10.1109/LWC.2025.3630221}}

@ARTICLE{mosek2019,
	author = {MOSEK ApS},
	title = {The {MOSEK} optimization toolbox for {MATLAB}},
	year = {2019},
	journal = {User’s Guide and
	Reference Manual}
}

@ARTICLE{DQLISAC2026,
	author={Dai, Qianglong and Zeng, Yong and Wang, Huizhi and You, Changsheng and Zhou, Chao and Cheng, Hongqiang and Xu, Xiaoli and Jin, Shi and Lee Swindlehurst, A. and Eldar, Yonina C. and Schober, Robert and Zhang, Rui and You, Xiaohu},
	journal={IEEE Commun. Surveys Tuts.}, 
	title={A Tutorial on {MIMO}-{OFDM} {ISAC}: From Far-Field to Near-Field}, 
	year={2026},
	volume={28},
	number={},
	pages={4319-4358},
	doi={10.1109/COMST.2025.3650568}}

@INPROCEEDINGS{CanSparse,
	author={Wang, Huizhi and Zeng, Yong},
	booktitle={2023 IEEE Globecom Workshops (GC Wkshps)}, 
	title={Can Sparse Arrays Outperform Collocated Arrays for Future Wireless Communications?}, 
	year={2023},
	volume={},
	number={},
	pages={667-672},
	doi={10.1109/GCWkshps58843.2023.10465094}}

@ARTICLE{SparseMIMOsensing,
	author={Roberts, William and Xu, Luzhou and Li, Jian and Stoica, Petre},
	journal={IEEE Trans. Antennas Propag.}, 
	title={Sparse Antenna Array Design for {MIMO} Active Sensing Applications}, 
	year={2011},
	volume={59},
	number={3},
	pages={846-858},
	doi={10.1109/TAP.2010.2103550}}

@ARTICLE{UltramassiveMIMO,
	author={Faisal, Alice and Sarieddeen, Hadi and Dahrouj, Hayssam and Al-Naffouri, Tareq Y. and Alouini, Mohamed-Slim},
	journal={IEEE Veh. Technol. Mag.}, 
	title={Ultramassive {MIMO} Systems at Terahertz Bands: Prospects and Challenges}, 
	year={2020},
	volume={15},
	number={4},
	pages={33-42},
	doi={10.1109/MVT.2020.3022998}}

@ARTICLE{ruoyu2025Large,
	author={Zhang, Ruoyu and Chen, Guangyi and Cheng, Lei and Guan, Xinrong and Wu, Qingqing and Wu, Wen and Zhang, Rui},
	journal={IEEE Trans. Wireless Commun.}, 
	title={Tensor-Based Channel Estimation for Extremely Large-Scale {MIMO}-{OFDM} With Dynamic Metasurface Antennas}, 
	year={2025},
	volume={24},
	number={7},
	pages={6052-6068},
	doi={10.1109/TWC.2025.3551144}}

@ARTICLE{ISACXUewen,
	author={Luo, Xuewen and Lin, Qingfeng and Zhang, Ruoyu and Chen, Hsiao-Hwa and Wang, Xingwei and Huang, Min},
	journal={IEEE Commun. Surveys Tuts.}, 
	title={{ISAC}—A Survey on Its Layered Architecture, Technologies, Standardizations, Prototypes, and Testbeds}, 
	year={2026},
	volume={28},
	number={},
	pages={485-526},
	doi={10.1109/COMST.2025.3565534}}

@ARTICLE{RuoyuUnified2024,
	author={Zhang, Ruoyu and Cheng, Lei and Wang, Shuai and Lou, Yi and Gao, Yulong and Wu, Wen and Ng, Derrick Wing Kwan},
	journal={IEEE Trans. Wireless Commun.}, 
	title={Integrated Sensing and Communication With Massive {MIMO}: A Unified Tensor Approach for Channel and Target Parameter Estimation}, 
	year={2024},
	volume={23},
	number={8},
	pages={8571-8587},
	doi={10.1109/TWC.2024.3351856}}

@ARTICLE{Jiang6g,
	author={Jiang, Wei and Han, Bin and Habibi, Mohammad Asif and Schotten, Hans Dieter},
	journal={IEEE Open J. Commun. Soc.}, 
	title={The Road Towards {6G}: A Comprehensive Survey}, 
	year={2021},
	volume={2},
	number={},
	pages={334-366},
	doi={10.1109/OJCOMS.2021.3057679}}

@ARTICLE{Saad6G,
	author={Saad, Walid and Bennis, Mehdi and Chen, Mingzhe},
	journal={IEEE Network}, 
	title={A Vision of {6G} Wireless Systems: Applications, Trends, Technologies, and Open Research Problems}, 
	year={2020},
	volume={34},
	number={3},
	pages={134-142},
	doi={10.1109/MNET.001.1900287}}

@article{maMovableAntennaEnhanced2024,
  author={Ma, Wenyan and Zhu, Lipeng and Zhang, Rui},
journal={IEEE Trans. Wireless Commun.}, 
title={Movable Antenna Enhanced Wireless Sensing via Antenna Position Optimization}, 
year={2024},
volume={23},
number={11},
pages={16575-16589},
doi={10.1109/TWC.2024.3443293}}

@article{zhangChannelEstimationMovableantenna2024,
  author={Zhang, Ruoyu and Cheng, Lei and Zhang, Wei and Guan, Xinrong and Cai, Yueming and Wu, Wen and Zhang, Rui},
  journal={IEEE Wireless Commun. Lett.}, 
  title={Channel Estimation for Movable-Antenna {MIMO} Systems via Tensor Decomposition}, 
  year={2024},
  volume={13},
  number={11},
  pages={3089-3093},
  doi={10.1109/LWC.2024.3450592}}

@article{maCompressedSensingBased2023,
author={Ma, Wenyan and Zhu, Lipeng and Zhang, Rui},
journal={IEEE Commun. Lett.}, 
title={Compressed Sensing Based Channel Estimation for Movable Antenna Communications}, 
year={2023},
volume={27},
number={10},
pages={2747-2751},
doi={10.1109/LCOMM.2023.3310535}}

@ARTICLE{BNIngMovableAntennaEnhanced2025,
	author={Ning, Boyu and Yang, Songjie and Wu, Yafei and Wang, Peilan and Mei, Weidong and Yuen, Chau and Bjornson, Emil},
	journal={IEEE Wireless Commun.}, 
	title={Movable Antenna-Enhanced Wireless Communications: General Architectures and Implementation Methods}, 
	year={2025},
	volume={},
	number={},
	pages={1-9},
	doi={10.1109/MWC.013.2400238}}

@ARTICLE{Shaoxiaodan6dsensing,
	author={Shao, Xiaodan and Zhang, Rui and Schober, Robert},
	journal={IEEE Wireless Commun. Lett.}, 
	title={Exploiting Six-Dimensional Movable Antenna for Wireless Sensing}, 
	year={2025},
	volume={14},
	number={2},
	pages={265-269},
	doi={10.1109/LWC.2024.3487966}}

@ARTICLE{9573279,
	author={Zhang, Ruoyu and Shim, Byonghyo and Wu, Wen},
	journal={IEEE Trans. Signal Process.}, 
	title={Direction-of-Arrival Estimation for Large Antenna Arrays With Hybrid Analog and Digital Architectures}, 
	year={2022},
	volume={70},
	number={},
	pages={72-88},
	doi={10.1109/TSP.2021.3119768}}

		\end{document}